\newtheorem{theorem}{Theorem}
\newtheorem{lemma}{Lemma}
\def\BibTeX{{\rm B\kern-.05em{\sc i\kern-.025em b}\kern-.08em
		T\kern-.1667em\lower.7ex\hbox{E}\kern-.125emX}}
\begin{document}
	
%% Title and Authors
\title{Power Allocation in Multi-User Cellular Networks: Deep Reinforcement Learning Approaches}

\author{Fan~Meng, Peng~Chen,~\IEEEmembership{Member,~IEEE}, Lenan~Wu\\ and Julian Cheng,~\IEEEmembership{Senior Member,~IEEE}
	\thanks{Fan~Meng, and Lenan~Wu are with the School of Information Science and
	Engineering, Southeast University, Nanjing 210096, China (e-mail: mengxiaomaomao@outlook.com, wuln@seu.edu.cn).}
	\thanks{Peng Chen is with the State Key Laboratory of Millimeter Waves, Southeast University, Nanjing 210096, China (e-mail: chenpengseu@seu.edu.cn).}
	\thanks{J. Cheng is with School of Engineering, The University of British Columbia, Kelowna, V1V 1V7, BC, Canada (e-mail: julian.cheng@ubc.ca).}
}

\maketitle

\begin{abstract}
	
The model-based power allocation algorithm has been investigated for decades, but it requires the mathematical models to be analytically tractable and it usually has high computational complexity. Recently, the data-driven model-free machine learning enabled approaches are being rapidly developed to obtain near-optimal performance with affordable computational complexity, and deep reinforcement learning (DRL) is regarded as of great potential for future intelligent networks. In this paper, the DRL approaches are considered for power control in multi-user wireless communication cellular networks. Considering the cross-cell cooperation, the off-line/on-line centralized training and the distributed execution, we present a mathematical analysis for the DRL-based top-level design. The concrete DRL design is further developed based on this foundation, and policy-based \emph{REINFORCE}, value-based deep Q learning (DQL), actor-critic deep deterministic policy gradient (DDPG) algorithms are proposed. Simulation results show that the proposed data-driven approaches outperform the state-of-art model-based methods on sum-rate performance, with good generalization power and faster processing speed. Furthermore, the proposed DDPG outperforms the \emph{REINFORCE} and DQL in terms of both sum-rate performance and robustness, and can be incorporated into existing resource allocation schemes due to its generality. 

\end{abstract}

\begin{IEEEkeywords}
	
Deep reinforcement learning, deep deterministic policy gradient, policy-based, interfering multiple-access channel, power control, resource allocation.
	
\end{IEEEkeywords}

\section{Introduction}\label{sec:intro}

Wireless data transmission has experienced tremendous growth in past years and will continue to grow in the future. When large numbers of terminals such as mobile phones and wearable devices are connected to the networks, the density of access point (AP) will have to be increased. Dense deployment of small cells such as pico-cells, femto-cells, has become the most effective solution to accommodate the critical demand for spectrum~\cite{6815898}. With denser APs and smaller cells, the whole communication network is flooded with wireless signals, and thus the intra-cell and inter-cell interference problems are severe~\cite{Zhang2015Coexistence}. Therefore, power allocation and interference management are crucial and challenging~\cite{Luo2008Dynamic},~\cite{Boccardi2013Five}.

Massive model-oriented algorithms have been developed to cope with interference management~\cite{Shen2018Fractional, Shi2011An, Chiang2008Power, Zhang2011Weighted, Yu2013Multicell}, and the existing studies mainly focus on sub-optimal or heuristic algorithms, whose performance gaps to the optimal solution are typically difficult to quantify. Besides, the mathematical models are usually assumed to be analytically tractable, but these models are not always accurate because both hardware and channel imperfections can exist in practical communication environments. When considering specific hardware components and realistic transmission scenarios, such as low-resolution A/D, nonlinear amplifier and user distribution, the signal processing techniques with model-driven tools are challenging to be developed. Moreover, the computational complexity of these algorithms is high and thus concrete implementation becomes impractical. Meanwhile, machine learning (ML)~\cite{Bishop2006Pattern} algorithms are potentially useful techniques for future intelligent wireless communications. These methods are usually model-free and data-driven\cite{Oshea2017An},~\cite{Wang2017Deep}, and the solutions are obtained through data learning instead of model-oriented analysis and design. 

Two main branches of ML are supervised learning and reinforcement learning (RL). With available training input/output pairs, the supervised learning method is simple but efficient especially for classification tasks such as modulation recognition~\cite{8454504} and signal detection~\cite{8052521},~\cite{8283347}. However, the correct output data sets or optimal guidance solutions can be difficult to obtain. Meanwhile, the RL~\cite{Sutton1998Reinforcement} has been developed as a goal-oriented algorithm, aiming to learn a better policy through exploration of uncharted territory and exploitation of current knowledge. The RL concerns with how agents ought to take actions in an environment so as to maximize some notion of cumulative reward, and the environment is typically formulated as a Markov decision process (MDP)~\cite{Busoniu2010Reinforcement}. Therefore, many RL algorithms~\cite{Sutton1998Reinforcement} have been developed using dynamic programming (DP) techniques. In classic RL, a value function or a policy is stored in a tabular form, which leads to the curse of dimensionality and the lack of generalization. To compromise generality and efficiency, function approximation is proposed to replace the table, and it can be realized by a neural network (NN) or deep NN (DNN)~\cite{Goodfellow2016Deep}. Combining RL with DNN, the deep RL (DRL) is created and widely investigated, and it has achieved stunning performance in a number of noted projects~\cite{DBLP:journals/corr/Li17b} such as the game of Go~\cite{Silver2017Mastering} and Atari video games~\cite{Mnih2015Human}.

The DRL algorithms can be categorized into three groups~\cite{DBLP:journals/corr/Li17b}: value-based, policy-based and actor-critic methods. The value-based DRL algorithm derives optimal action by the action-state value function, and the most widely-used algorithms include deep Q learning (DQL) and Sarsa. As for the policy-based algorithm such as \emph{REINFORCE}, a stochastic policy is directly generated. Both of these two methods have the following defects in general:
\begin{enumerate}
	\item Value-based: The action space must be discrete, which introduces quantization error for tasks with continuous action space. The output dimension increases exponentially for multi-action issues or joint optimizations.
	\item Policy-based: It is difficult to achieve a balance between exploration and exploitation, and the algorithm usually converges with a suboptimal solution. The variance of estimated gradient is high. In addition, the action space is still discrete.
\end{enumerate}
The actor-critic algorithm is developed to overcome the aforementioned drawbacks as a hybrid of the value-based and policy-based methods. It consists of two components: an actor to generate policy and a critic to assess the policy. A better solution is learned through settling a multi-objective optimization problem, and updating the parameters of the actor and the critic alternatively.

In a communication system where multiple users share a common frequency band, the problem of choosing transmit power dynamically in response to physical channel conditions in order to maximize the downlink sum-rate with maximal power constraints is NP-hard~\cite{Luo2008Dynamic}. Two advanced model-based algorithms, namely fractional programming (FP)~\cite{Shen2018Fractional} and weighted minimum mean squared error (WMMSE)~\cite{Shi2011An} are regarded as benchmarks in the simulation comparisons. The supervised learning is studied in~\cite{8444648}, where a DNN is utilized to mimic the guidance algorithm, and accelerate the processing speed with acceptable performance loss. The ensemble of DNNs is also proposed to further improve the performance in~\cite{DBLP:DeepNeural}. As for the interference management/power allocation with DRL approaches, the current research work mainly concentrates on value-based methods. The QL or DQL is widely applied in various communication scenarios by a number of articles, such as Het-Nets~\cite{5700414, 5983301, 6965655, 8422864}, cellular networks~\cite{7997440},~\cite{DBLP:DRL} and V2V broadcasting~\cite{8450518}. To the best of the authors' knowledge, the classic policy-based approach has seldom been considered on this issue~\cite{4525466}. An actor-critic algorithm has been applied for power allocation~\cite{8100645}, where a Gaussian probability distribution is used to formulate a stochastic policy.

In this paper, we consider an interfering multiple-access channel (IMAC) scenario which is similar to~\cite{8444648}. We focus on the system-level optimization, and target at maximizing the overall sum-rate by inter-cell interference coordination. This is actually a static optimization problem, where the target is a multi-variate ordinary function. While the standard DRL tools are designed for the DP which can be settled recursively, a direct utilization of these tools to tackle the static optimization problem will suffer some performance degradation. In our previous work~\cite{Meng}, we verified through simulations that the present widely applied standard DQL algorithm suffers sum-rate performance degradation on power allocation. In this work, we explain the reasons for this degradation and revise the DRL algorithms eliminate such degradation, by developing theoretical analysis on the general DRL approaches to address the static optimization problem. On this theoretical basis, three more simplified but efficient algorithms, namely policy-based \emph{REINFORCE}, value-based DQL and actor-critic-based deep deterministic policy gradient (DDPG)~\cite{Lillicrap2015Continuous} are proposed. Simulation results show that the proposed DQL achieves the highest sum-rate performance when compared to the ones with standard DQL, and our DRL approaches also outperform the state-of-art model-based methods. The contributions of this manuscript are summarized as follows:
\begin{itemize}
	\item We develop mathematical analysis on proper application of general DRL algorithms to address the static optimization problems, and we consider dynamic power allocation in multi-user cellular networks. 
	\item The training procedure of the proposed DRL algorithm is centralized and the learned model is distributively executed. Both the off-line and on-line training are introduced, and an environment tracking mechanism is proposed to dynamically control the on-line learning.
	\item The logarithmic representation of channel gain and power is used to settle numerical problem in DNNs and improve training efficiency. Besides, a sorting preprocessing technique is proposed to accommodate varying user densities and reduce computation load.
	\item On the basis of proposed general DRL on static optimization, the concrete DRL design is further introduced and we propose three novel algorithms, namely \emph{REINFORCE}, DQL and DDPG, which are respectively policy-based, value-based and actor-critic-based. Contrast simulations on sum-rate performance, generalization ability and computation complexity are also demonstrated.
\end{itemize}

The remainder of this paper is organized as follows. Section~\ref{sec:system} outlines the power control problem in the wireless cellular network with IMAC. In Section~\ref{sec:DRL1}, the top-level DRL design for static optimization problem is analyzed and introduced. In Section~\ref{sec:DRL2} our proposed DRL approaches are presented in detail. Then, the DRL methods are compared along with benchmark algorithms in different scenarios, and the simulation results are demonstrated in Section~\ref{sec:sim}. Conclusions and discussion are given in Section~\ref{sec:con}. 

\section{System Model}\label{sec:system}

We investigate cross-cell dynamic power allocation in a wireless cellular network with IMAC. The network system is composed of $ N $ cells, and a base station (BS) with one transmitter is deployed at each cell center. Assuming shared frequency bands, $ K $ users are simultaneously served by the center BS.

\subsection{Problem Formulation}

At time slot $ t $, the independent channel gain between the BS $ n $ and the user $ k $ in cell $ j $ is denoted by $ g^t_{n,j,k} $, and can be presented as 
\begin{equation}\label{equ:g}
g^t_{n,j,k} = |h^t_{n,j,k}|^2\beta_{n,j,k}
\end{equation}
where $ |\cdot| $ is the absolute value operation; $ h^t_{n,j,k} $ is a complex Gaussian random variable with Rayleigh distributed magnitude; $ \beta_{n,j,k} $ is the large-scale fading component, taking both geometric attenuation and shadow fading into account, and it is assumed to be invariant over the time slot. According to the Jakes’ model~\cite{Bottomley1993Jakes}, the small-scale flat fading is modeled as a first-order complex Gauss-Markov process
\begin{equation}\label{equ:h}
h^t_{n,j,k} = \rho h^{t-1}_{n,j,k} + n^t_{n,j,k}
\end{equation}
where $ h^1_{n,j,k} \sim \mathcal{CN}(0, 1) $ and $ n^t_{n,j,k} \sim \mathcal{CN}(0, 1-\rho^{2}) $. The correlation $ \rho $ is determined by
\begin{equation}\label{equ:rho}
\rho = J_0(2\pi f_d T_s)
\end{equation}
where $ J_0(\cdot) $ is the first kind zero-order Bessel function, $ f_d $ is the maximum Doppler frequency, and $ T_s $ is the time interval between adjacent instants.

The downlink from the $ n $-th BS to the $ k $-th serving AP is denoted by $ \textup{dl}_{n, k} $. Supposing that the signals from different transmitters are independent of each other, the channels remain constant in each time slot. Then the signal-to-interference-plus-noise ratio (SINR) of $ \textup{dl}_{n, k} $ in time slot $ t $ can be formulated by
\begin{equation}\label{equ:sinr}
\boldsymbol{\gamma}^t_{n,k} = \frac{g^t_{n,n,k} p^t_{n,k}}{\sum_{k' \neq k} g^t_{n,n,k} p^t_{n,k'} + \sum_{n' \in D_n} g^t_{n',n,k} \sum_{j} p^t_{n',j} + \sigma^2}
\end{equation}
where $ D_n $ is the set of interference cells around the $ n $-th cell, $ p^t_{n,k} $ is the emitting power of the transmitter $ n $ to its receiver $ k $ at slot $ t $, and $ \sigma^2 $ denotes the additional noise power. The terms $ \sum_{k' \neq k} g^t_{n,n,k} p^t_{n,k'} $ and $ \sum_{n' \in D_n} g^t_{n',n,k} \sum_{j} p^t_{n',j} $ represent the intra-cell and inter-cell interference power, respectively. With normalized bandwidth, the downlink rate of $ \textup{dl}_{n, k} $ in time slot $ t $ is expressed as 
\begin{equation}\label{equ:C}
C^t_{n,k} = \log_2\left(1 + \boldsymbol{\gamma}^t_{n,k}\right).
\end{equation}
Under maximum power constraint of each transmitter, our goal is to find the optimum power, to maximize the sum-rate objective function. The optimization problem is given as
\begin{equation}\label{equ:opt1}
\begin{split}
& \max_{\boldsymbol{p}^t} \quad C(\boldsymbol{g}^t, \boldsymbol{p}^t)\\
& \textup{s.t.}\quad 0 \leq p^t_{n,k} \leq P_{\textup{max}}, \;\forall n,k
\end{split}
\end{equation}
where $ P_{\textup{max}} $ denotes the maximum emitting power; the power set $ \boldsymbol{p}^t $, channel gain set $ \boldsymbol{g}^t $, and sum-rate $ C(\boldsymbol{g}^t, \boldsymbol{p}^t) $ are respectively defined as
\begin{align}
& \boldsymbol{p}^t := \{p^t_{n,k} \mid \forall n,k\},\\
& \boldsymbol{g}^t := \{g^t_{n',n,k} \mid \forall n',n,k\},\\
& C(\boldsymbol{g}^t, \boldsymbol{p}^t) := \sum_{n, k} C^t_{n,k}.\label{equ:sumrate}
\end{align}
The problem \eqref{equ:opt1} is non-convex and NP-hard. As for the model-based methods, the performance gaps to the optimal solution are typically difficult to quantify, and also the practical implementation is restricted due to the high computational complexity. More importantly, the model-oriented approached cannot accommodate future heterogeneous service requirements and randomly evolving environments, and thus the data-driven DRL algorithms are discussed and studied in the following section.

\section{Deep Reinforcement Learning}\label{sec:DRL1}

\subsection{Problem Formulation}

A general MDP problem concerns about a single or multiple agents interacting with an environment. In each interaction, the agent takes action $ a $ by policy $ \pi $ with observed state $ s $, then receives a feedback reward $ r $ and a new state from the environment. The agent aims to find an optimal policy to maximize the cumulative reward over the continuous interactions, and the DRL algorithms are developed for such problems.

To facilitate the analysis, the discrete-time model-based MDP is considered, and the action and state spaces are assumed to be finite. The $ 4 $-tuple $ (S,A,P,R) $ is known, where the elements are
\begin{enumerate}
	\item $ S $, a finite set of states,
	\item $ A $, a finite set of actions,
	\item $ P^{a}_{s \to s'} = \Pr(s'|s, a) $ is the probability that action $ a $ in state $ s $ will lead to state $ s' $,
	\item $ R $, a finite set of immediate rewards, where element $ r^{a}_{s \to s'} $ denotes the reward obtained after transitioning from state $ s $ to state $ s' $, due to action $ a $.
\end{enumerate}
Under stochastic policy $ \pi $, the $ T $-step cumulative reward and $ \gamma $-discounted cumulative reward are considered as the state value function $ V $. With initial state $ s^1 $, they are defined as
\begin{align}
V^T_{\pi}(s^1) & := \mathbb{E}_{\pi}\left[\frac{1}{T} \sum_{t = 1}^{T} r^t \mid s^1\right]\label{equ:V1}
\end{align}
and
\begin{align}
V^\gamma_{\pi}(s^1) & := \lim_{T \to \infty}\mathbb{E}_{\pi}\left[\sum_{t = 1}^{T} \gamma^{t-1} r^t \mid s^1\right]\label{equ:V2}
\end{align}
where $ \gamma \in [0, 1) $ denotes a discount factor that trades off the importance of immediate and future rewards, and $ \mathbb{E}[\cdot] $ is the expectation operation. For an initial state-action pair $ (s^1, a^1) $, the state-action value functions, namely the Q functions are defined as
\begin{align}
Q^T_{\pi}(s^1) & := \mathbb{E}_{\pi}\left[\frac{1}{T} \sum_{t = 1}^{T} r^t \mid s^1, a^1\right]\label{equ:Q1}
\end{align}
and
\begin{align}
Q^{\gamma}_{\pi}(s^1) & := \lim_{T \to \infty} \mathbb{E}_{\pi}\left[\sum_{t = 1}^{T} \gamma^{t-1} r^{t} \mid s^1, a^1\right].\label{equ:Q2}
\end{align}

Starting from the perspective of MDP, the following conclusions are given when the environment satisfies certain conditions. 

\begin{theorem}\label{the:1}
	
When the environment transition is independent with action, and the current action is only related to the reward function of this instant, then the optimal policy for maximization of cumulative rewards is equivalent to a combination of single-step rewards.

\end{theorem}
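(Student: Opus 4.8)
\emph{Proof proposal.} The plan is to exploit the two hypotheses to decouple the cumulative objective into a weighted sum of expected one-step rewards, and then to show that this sum is maximized by greedily maximizing each one-step term in isolation. First I would expand the $T$-step value \eqref{equ:V1} by linearity of expectation as
\begin{equation}
V^T_{\pi}(s^1) = \frac{1}{T}\sum_{t=1}^{T}\mathbb{E}_{\pi}\!\left[r^t \mid s^1\right] = \frac{1}{T}\sum_{t=1}^{T}\sum_{s^t}\Pr(s^t \mid s^1)\,\mathbb{E}_{\pi}\!\left[r^t \mid s^t\right],
\end{equation}
and observe that the discounted value \eqref{equ:V2} admits the identical decomposition with the weights $\gamma^{t-1}$ replacing $1/T$.

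The crux is the claim that the marginal $\Pr(s^t \mid s^1)$ is independent of the policy $\pi$. This is where the first hypothesis enters: since the transition is independent of the action, $P^{a}_{s \to s'} = \Pr(s' \mid s, a) = \Pr(s' \mid s)$, so the state process is a plain Markov chain whose $t$-step kernel is fixed no matter which actions are selected. I would prove this by induction on $t$, writing $\Pr(s^{t+1} \mid s^1) = \sum_{s^t}\Pr(s^{t+1} \mid s^t)\Pr(s^t \mid s^1)$ and noting that the action-free kernel together with the inductive hypothesis removes all $\pi$-dependence. Consequently the only avenue by which $\pi$ influences $V^T_{\pi}$ is the inner term $\mathbb{E}_{\pi}[r^t \mid s^t] = \sum_{a}\pi(a \mid s^t)\,\bar{r}(s^t, a)$, where $\bar{r}(s, a) := \sum_{s'}\Pr(s' \mid s)\,r^{a}_{s \to s'}$ is the expected one-step reward; here the second hypothesis is used to guarantee that the action chosen at instant $t$ perturbs only $r^t$ and none of the later rewards, so that no cross-temporal coupling survives.

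It then follows at once that for every state $s$ and every policy $\pi$,
\begin{equation}
\mathbb{E}_{\pi}[r^t \mid s^t = s] = \sum_{a}\pi(a \mid s)\,\bar{r}(s, a) \;\le\; \max_{a}\bar{r}(s, a),
\end{equation}
with equality attained by the greedy stationary policy $\pi^{\star}(s) \in \arg\max_{a}\bar{r}(s, a)$. Substituting this bound termwise gives $V^T_{\pi}(s^1) \le V^T_{\pi^{\star}}(s^1)$ for all $\pi$, and the same substitution in the discounted expansion gives $V^{\gamma}_{\pi}(s^1) \le V^{\gamma}_{\pi^{\star}}(s^1)$; the identical reasoning applied to \eqref{equ:Q1} and \eqref{equ:Q2} handles the $Q$-functions. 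Hence the policy optimal for the cumulative reward coincides with the one obtained by independently maximizing the single-step reward in each state, which is precisely the asserted equivalence to a ``combination of single-step rewards.''

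I expect the main obstacle to be the rigorous justification that $\Pr(s^t \mid s^1)$ carries no policy dependence, since this step quietly invokes both hypotheses at once and must be presented as a clean induction rather than merely asserted; a secondary technical point is the interchange of the limit $T \to \infty$ with the summation in the discounted case \eqref{equ:V2}, which relies on the finiteness of the reward set $R$ and on $\gamma < 1$ to invoke dominated convergence, ensuring the termwise inequality is preserved in the limit.
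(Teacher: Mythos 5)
Your proposal is correct and follows essentially the same route as the paper: both arguments use the action-independence of the transition kernel to make the state marginals policy-independent, decompose the cumulative objective (via linearity of expectation in your case, via fully unrolling the Bellman-style recursion in the paper's) into a weighted sum of per-timestep terms, and then maximize each term greedily. Your write-up is in fact somewhat more careful than the paper's on two points it glosses over --- the explicit induction establishing that $\Pr(s^t \mid s^1)$ carries no $\pi$-dependence, and the dominated-convergence justification for the discounted case --- and slightly more general in averaging $r^{a}_{s \to s'}$ over $s'$ rather than assuming the reward is independent of the next state outright.
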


\begin{proof}
	
First, we focus on \eqref{equ:V1} and it is expanded as
\begin{equation}\label{equ:V3}
\begin{split}
V^T_{\pi}(s^1) = & \sum_{a^1 \in A} \pi(a^1|s^1) \sum_{s^2 \in S} P^{a^1}_{s^1 \to s^2}\\
& \times \left(\frac{1}{T} r^{a^1}_{s^1 \to s^2} + \frac{T-1}{T} V^{T-1}_{\pi}(s^2)\right).
\end{split}
\end{equation}
The description of the assumed conditions can be mathematically formulated as
\begin{align}
& P^{a}_{s \to s'} = P_{s \to s'},\label{equ:cond1}\\
& r^{a}_{s \to s'} = r_{s}^{a}.\label{equ:cond2}
\end{align} 
Without loss of generality, for probability mass functions of policy $ \pi $ and state transitioning $ P $, clearly we have
\begin{align}
& \sum_{a \in A} \pi(a|s) = 1,\label{equ:cond3}\\
& \sum_{s' \in S} P_{s \to s'} = 1.\label{equ:cond4}
\end{align}
From \eqref{equ:cond1}, \eqref{equ:cond2}, \eqref{equ:cond3} and \eqref{equ:cond4}, the state value function \eqref{equ:V3} can be rewritten as
\begin{equation}\label{equ:V4}
\begin{split}
V^T_{\pi}(s^1) = & \sum_{a^1 \in A} \pi(a^1|s^1) \sum_{s^2 \in S} P_{s^1 \to s^2}\\
& \times \left(\frac{1}{T} r^{a^1}_{s^2} + \frac{T-1}{T} V^{T-1}_{\pi}(s^2)\right)\\
= & \frac{1}{T} \sum_{a^1 \in A} \pi(a^1|s^1) r^{a^1}_{s^1}\\
& + \frac{T-1}{T} \sum_{s^2 \in S} P_{s^1 \to s^2} V^{T-1}_{\pi}(s^2).
\end{split}
\end{equation}
The full unrolling of \eqref{equ:V4} is given as
\begin{equation}\label{equ:V6}
\begin{split}
V^T_{\pi}(s^1) = & \frac{1}{T} \sum_{a^1 \in A} \pi(a^1|s^1) r^{a^1}_{s^1}\\ 
& +\frac{1}{T} \sum_{t=2}^T \sum_{a^t \in A} \pi(a^t|s^t) \sum_{s^{t} \in S} \prod_{t'=1}^{t-1} P_{s^{t'} \to s^{t'+1}} r^{a^t}_{s^{t}}.
\end{split}
\end{equation} 
Since the state transfer is irrelevant to the action, and the state can be independently sampled as  $ \boldsymbol{s} = <s^1,\cdots,s^{T+1}> $.
\begin{lemma}\label{lem:1}
With state sequence $ \boldsymbol{s} = <s^1,\cdots,s^{T+1}> $, the maximization of \eqref{equ:V6} with respect to $ a^t,\forall t $ can be decomposed to the subproblem:
\begin{equation}\label{equ:V555}
\max_{a^t} V^T_{\pi}(\boldsymbol{s}) \iff \max_{a^t} r^{a^t}_{s^t}.
\end{equation}
\end{lemma}

\begin{proof}
\begin{equation}\label{equ:V55}
\begin{split}
\max_{a^t} V^T_{\pi}(\boldsymbol{s}) & = \max_{a^t} \frac{1}{T} \sum_{t' = 1}^T \sum_{a^{t'} \in A} \pi(a^{t'}|s^{t'}) r^{a^{t'}}_{s^{t'}}\\
& \iff \max_{a^t} \sum_{a^t \in A} \pi(a^t|s^t) r^{a^t}_{s^t}\\
& = \max_{a^t} r^{a^t}_{s^t}.\\
\end{split}
\end{equation}
\end{proof}

Obviously with Lemma~\ref{lem:1}, it can be proved that the maximization of \eqref{equ:V6} with respect to $ \{a^t|\forall t\} $ can be decomposed into $ T $ subproblems:
\begin{equation}\label{equ:V556}
\max_{\{a^t|\forall t\}} V^T_{\pi}(\boldsymbol{s}) \iff \Big\{\max_{a^t} r^{a^t}_{s^t}\Big|\forall t\Big\}.
\end{equation}
Besides, the equivalence proof of $ \gamma $-discounted cumulative reward is similar.

\end{proof}

Since the channel is modeled as a first-order Markov process, the environment satisfies the two conditions in Theorem~\ref{the:1}. Then we let $ r^{a^t}_{s^t} = C(\boldsymbol{g}^t, \boldsymbol{p}^t) $ and $ a^t = \boldsymbol{p}^t $ along with the constraints, the optimization problem of \eqref{equ:opt1} with DRL approach is equivalent to \eqref{equ:V55}.

Although the equivalence is mathematically proved, and it has no concern with the value of $ T $ or $ \gamma $. Several facts must be observed when the improper hyper-parameter is adopted. We take the value-based method as an example, and optimal Q function associated with Bellman equation is given as 
\begin{equation}\label{equ:bellman}
Q^*(s,a) = r^a_s + \gamma \max_{a'}Q(s',a').
\end{equation}
This function must be estimated precisely to achieve the optimal action. Here we list two issues caused by $ \gamma>0 $:
\begin{enumerate}
	\item The Q value is overestimated, and the bias is $ \gamma \max_{a'}Q(s',a') $. This effect actually has no or little influence on the final performance, since this deviation does not concern with action $ a $. 
	\item The variance of Q value $ \sigma^2_q $ becomes enlarged, and $ \sigma^2_q $ becomes larger as $ \gamma $ increases. During training, the additional noise on data can slow down the convergence speed, and also can deteriorate the performance of learned DNN.
\end{enumerate}
In~\cite{Meng}, we verified that an increasing $ \gamma $ has negative influence on the sum-rate performance of DQN in simulations, as shown in Fig.~\ref{fig:gamma}. \textbf{Therefore, we suggest using hyper-parameter $ \gamma = 0 $ or $ T = 1 $ in this specific scenario, and thus $ Q(s,a) = r^a_s $.} In the remainder of this article, we make adjustment to the standard DRL algorithms, and particularly claim that the Q function is equal to the reward function. The aforementioned analysis and discussion provide the design guidance for the next DRL.

\begin{figure}
	\centering
	\includegraphics[width=3.6in]{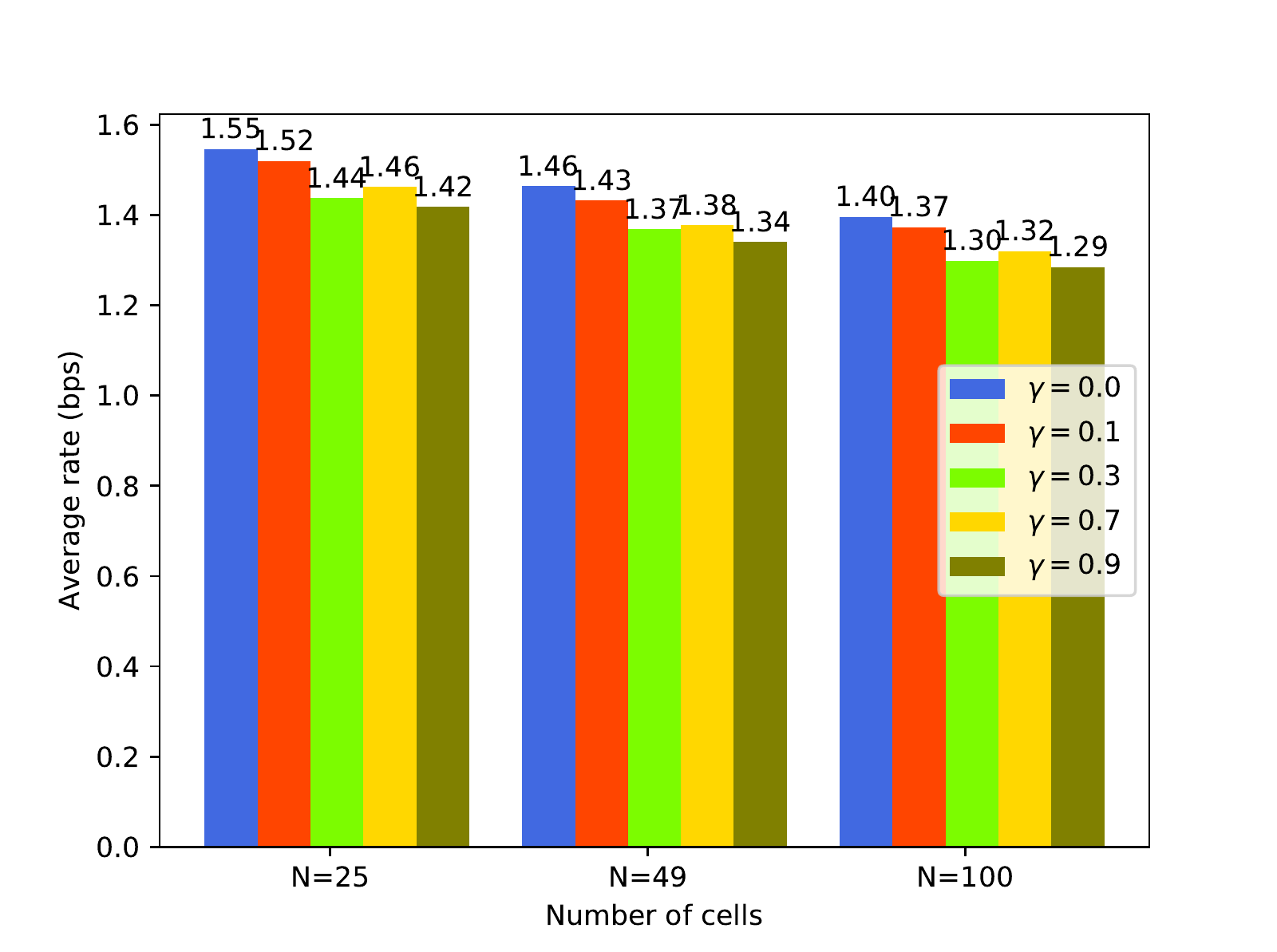}
	\caption{The average sum-rate versus cellular network scalability for trained DQNs with different $ \gamma $ values.}
	\label{fig:gamma}
\end{figure}

\subsection{Centralized Training \& Distributed Execution}\label{sec:cen-tra}

In \eqref{equ:opt1}, only a single center agent is trained and then implemented. Under this framework, the current local channel state information (CSI) is first estimated and transmitted to the center agent for further processing. The decisions of allocated powers are then broadcast to the corresponding transmitters and executed. However, several defects of the centralized framework with a massive number of cells must be observed:
\begin{enumerate}
	\item Space explosion: The cardinalities of DNN I/O is proportional to the cell number $ N $, and training such a DNN is difficult since the state-action space increases exponentially with the I/O dimensions. Additionally, exploration in high-dimensional space is inefficient, and thus the learning can be impractical.
	\item Delivery pressure: The center agent requires full CSI of the communication network in current time. When the cell number $ N $ is large and low-latency service is required, both the transmitting CSI to the agent and the broadcasting allocation scheme to each transmitter are challenging. 
\end{enumerate}
In~\cite{8466370}, a framework of centralized training and distributed execution was proposed to address these challenges. The power allocation scheme is decentralized, the transmitter of each link is regarded as an agent, and all agents in the communication network operate synchronously and distributively. Meanwhile, the agent $ n,k $ only partially consumes channel information $ \boldsymbol{g}^t_{n,k} $ and outputs its own power $ p^t_{n,k} $, where $ \boldsymbol{g}^t_{n,k} $ is defined as
\begin{equation}\label{equ:g_t}
\boldsymbol{g}^t_{n,k} = \{g^t_{n',n,k} \mid n'\in\{n,D_n\},\forall k\}.
\end{equation}
The multi-objective programming is established as 
\begin{equation}\label{equ:opt4}
\begin{split}
& \max_{\boldsymbol{p}^t} \quad \Big\{C(\boldsymbol{g}^t_{n,k}, p^t_{n,k})\mid\forall n,k\Big\} \\
& \textup{s.t.}\quad 0 \leq p^t_{n,k} \leq P_{\textup{max}}, \;\forall n,k.
\end{split}
\end{equation}
However, multi-agent training is still difficult, since it requires much more learning data, training time and DNN parameters. Besides, links in distinct areas are approximately identical since their characteristics are location-invariant and the network is large. To simplify this issue, all agents are treated as the same agent. Same policy is shared and it is learned with collected data from all links. Therefore, the training is centralized, and the execution is distributed. The detailed design of the DRL algorithms will be introduced in the following section.

\subsection{On-line Training}\label{sec:two-step}

In our previously proposed model-free two-step training framework~\cite{Meng}, the DNN is first pretrained off-line in simulated wireless communication scenarios. This procedure is to reduce the on-line training stress, due to the large data requirement for data-driven algorithm by nature. Second, with transfer learning, the off-line learned DNN can be deployed in real networks. However, it will suffer from the imperfections in real implementations, dynamic wireless channel environment and some unknown issues. Therefore, the agent must be trained on-line in the initial deployment, in order to adapt to actual unknown issues that cannot be simulated. To prevent a prolonged degradation of the system performance, parameter update of the DNN to accommodate the environment changes is also necessary. 

One simple but brute-force approach is to use continuous regular training, which leads to a great waste of network performance and computation resources. On-line training is costly for several reasons. First, interaction with the real environment is required, and this exploration ruins the sum-rate performance of communication system to some extent. Second, the training requires high performance computing to reduce time cost, while the hardware is expensive and power-hungry. On the one hand, training is unnecessary when the environment fluctuation is negligible, but on the other hand this method cannot timely respond to the outburst. 

Therefore, we propose an environment tracking mechanism as an efficient approach to dynamically control the agent training. For DRL algorithms, the shift of environment indicates that the reward function $ R $ is changed, and thus the policy $ \pi $ or Q function must be adjusted correspondingly to avoid performance degradation. Hence, the Q value needs to approximate the reward value $ r $ as accurately as possible. We define the normalized critic loss $ l^t_c $ as
\begin{equation}\label{equ:critic1}
l^t_c = \frac{1}{2T_l} \sum_{t'= t-T_l+1}^t \left(1 - \frac{Q(s^{t'}, a^{t'} ; \boldsymbol{\theta})}{r^{t'}}\right)^2
\end{equation}
where $ \boldsymbol{\theta} $ denotes the DNN parameter; $ T_l $ is the observation window; $ l^t_c $ is an index to evaluate the accuracy of Q function approximation to the actual environment. Once $ l^t_c $ exceeds some fixed threshold $ l_{\max} $, the training of DNN is initiated to track the current environment; otherwise, the learning procedure is omitted. The introduction of tracking mechanism achieves a balance between performance and efficiency. With on-line training, the DRL is model-free and data-driven in a true sense.

\section{DRL Algorithm Design}\label{sec:DRL2}

\subsection{Concrete DRL Design}

In the previous section we discuss the DRL on a macro-level, and concrete design of several DRL algorithms namely \emph{REINFORCE}, DQL and DDPG, is introduced in this subsection. First, the descriptions of state, reward and action are given, as an expansion of Section~\ref{sec:cen-tra}. 

\subsubsection{State}

The selection of environment information is significant, and obviously current partial CSI $ \boldsymbol{g}^t_{n,k} $ is the most critical feature. It is inappropriate to directly use $ \boldsymbol{g}^t_{n,k} $ as DNN input due to numerical issues. In~\cite{Meng}, a logarithmic normalized expression of $ \boldsymbol{g}^t_{n,k} $ is proposed, and it is given as
\begin{equation}\label{equ:gamma_g}
\boldsymbol{\Gamma}_{n,k}^t := \frac{1}{g^t_{n,k,k}} \boldsymbol{g}^t_{n,k} \otimes \boldsymbol{1}_K
\end{equation}
where $ \otimes $ is the Kronecker product, and $ \boldsymbol{1}_K $ is a vector filled with $ K $ ones. The channel amplitudes elements in $ \boldsymbol{g}^t_{n,k} $ are normalized by the downlink $ \textup{dl}_{n, k} $, and the logarithmic representation is preferred since that amplitudes often vary by orders of magnitude. The cardinality of $ \boldsymbol{\Gamma}_{n,k}^t $ is $ (|D_n|+1)K $, and it changes with varying AP densities. First, we define the sorting function
\begin{equation}\label{equ:sort1}
\widetilde{x}, i := sort(x, y)
\end{equation}
where set $ x $ is sorted in decreasing order, and the first $ y $ elements are selected as the new set $ \widetilde{x} $. The indices of the chosen components are donated by $ i $. To further reduce the input dimension and accommodate different AP densities, the new set $ \widetilde{\boldsymbol{\Gamma}}_{n,k}^t $ and its indices $ I_{n,k}^t $ are obtained by \eqref{equ:sort1} with $ x = \boldsymbol{\Gamma}_{n,k}^t $ and $ y = I_c $, where $ I_c $ is a constant.

The channel is modeled as a Markov process and correlated in the time domain, and thus the last solutions can provide a better initialization for this moment's solve and interference information. In correspondence to $ \widetilde{\boldsymbol{\Gamma}}_{n,k}^t $, the last power set $ \widetilde{\boldsymbol{p}}^{t-1}_{n,k} $ is defined as 
\begin{equation}\label{equ:p_t}
\widetilde{\boldsymbol{p}}^{t-1}_{n,k} := \{p^{t-1}_{n,k}\mid (n,k) \in I_{n,k}^t\}.
\end{equation}
The irrelevant or weak-correlated input elements consume more computational resources and even lead to performance degradation, but some auxiliary information can improve the sum-rate performance of DNN. Similar to \eqref{equ:p_t}, the assisted feature is given by
\begin{equation}\label{equ:c_t}
\widetilde{\boldsymbol{C}}^{t-1}_{n,k} := \{C^{t-1}_{n,k}\mid (n,k) \in I_{n,k}^t\}.
\end{equation}
Two types of feature $ \mathrm{f} $ are considered, and they are written as
\begin{align}
\mathrm{f}_1 & := \{\widetilde{\boldsymbol{\Gamma}}_{n,k}^t, \widetilde{\boldsymbol{p}}^{t-1}_{n,k}\},\label{equ:f1}\\
\mathrm{f}_2 & := \{\widetilde{\boldsymbol{\Gamma}}_{n,k}^t, \widetilde{\boldsymbol{p}}^{t-1}_{n,k}, \widetilde{\boldsymbol{C}}^{t-1}_{n,k}\}.\label{equ:f2}
\end{align}
The partially observed state $ s $ for DRL algorithms can be $ \mathrm{f}_1 $ or $ \mathrm{f}_2 $, and their performance will be compared in the simulation section. Moreover, the cardinalities of state $ |S| $, i.e., the input dimensions, are $ 2I_c $ and $ 3I_c $.

\subsubsection{Reward}

According to our investigation, there is few work on the strict design criteria of reward function due to the problem complexity. In general, the reward function is elaborately designed to improve the agent's transmitting rate and also to mitigate its interference to neighbouring links~\cite{5983301, 6965655, 8422864, 7997440, DBLP:DRL, 8450518}. In our previous work, we use averaged sum-rate \eqref{equ:sumrate} as the reward, and it follows that the sum of all rewards is equal to the network sum-rate. However, rates from remote cells is introduced, and they have little relationship with decision of action $ p_{n,k}^{t} $. These irrelevant elements enlarge the variance of reward function, and thus the DNN becomes hard to train when the network becomes large. Therefore, the localized reward function is proposed as
\begin{equation}\label{equ:reward}
r_{n,k}^t := C_{n,k}^t + \alpha \left(\sum_{n,k'\neq k} C_{n,k'}^t + \sum_{n' \in D_n,j} C_{n',j}^t\right)
\end{equation}
where $ \alpha \in \mathbb{R}^+ $ is a weight coefficient of interference effect, and $ \mathbb{R}^+ $ denotes the positive real scalar. The sum of local rewards is proportional to the sum-rate 
\begin{equation}\label{equ:propotion}
\sum_{n,k} r_{n,k}^t \varpropto C(\boldsymbol{g}^t, \boldsymbol{p}^t)
\end{equation}
when the cell number $ N $ is sufficient large.

\subsubsection{Action}

The downlink power is a non-negative continuous scalar, and is limited by the maximum power $ P_{\textup{max}} $. Since that the action space must be finite for certain algorithms such as DQL and \emph{REINFORCE}, the possible emitting power is quantized in $ |A| $ levels. The allowed power set is given as
\begin{equation}\label{equ:p_set}
A := \left\{0, \left\{P_{\textup{min}}\left(\frac{P_{\textup{max}}}{P_{\textup{min}}}\right)^{\frac{i}{|A|-2}}\Big| i=0,\cdots,|A|-2\right\} \right\}
\end{equation}
where $ P_{\textup{min}} $ is the non-zero minimum emitting power. Discretization of continuous variable results in quantization error. Meanwhile, the actor of DDPG directly outputs deterministic action $ a = p_{n,k}^t $, and this constrained continuous scalar is generated by a scaled sigmoid function:
\begin{equation}\label{equ:ddpg_output}
a := P_{\textup{max}} \cdot \frac{1}{1 + \exp(-x)}
\end{equation}
where $ x $ is the pre-activation output. Except for elimination of quantization error, DDPG has great potential on multi-action task. For example, We take a task with action number $ N_A $ for example, the output dimension of DDPG $ |A| = N_A $. While for both DQL and \emph{REINFORCE}, we have $ |A| = \prod_i^{N_A} |A_i| $. Since the action space increases exponentially, the application of multi-action tasks with such algorithms is impractical. 

\subsubsection{Experience Replay}

The concept of ``Experience Replay" is proposed to deal with the problem: the data is correlated and non-stationary distributed in MDPs, while the training samples for DNN are suggested to be independently and identically distributed (I.I.D.). In our investigated problem, the data correlation in time domain is not strong, and this technique is optional.

\begin{figure}
	\centering
	\includegraphics[width=3.6in]{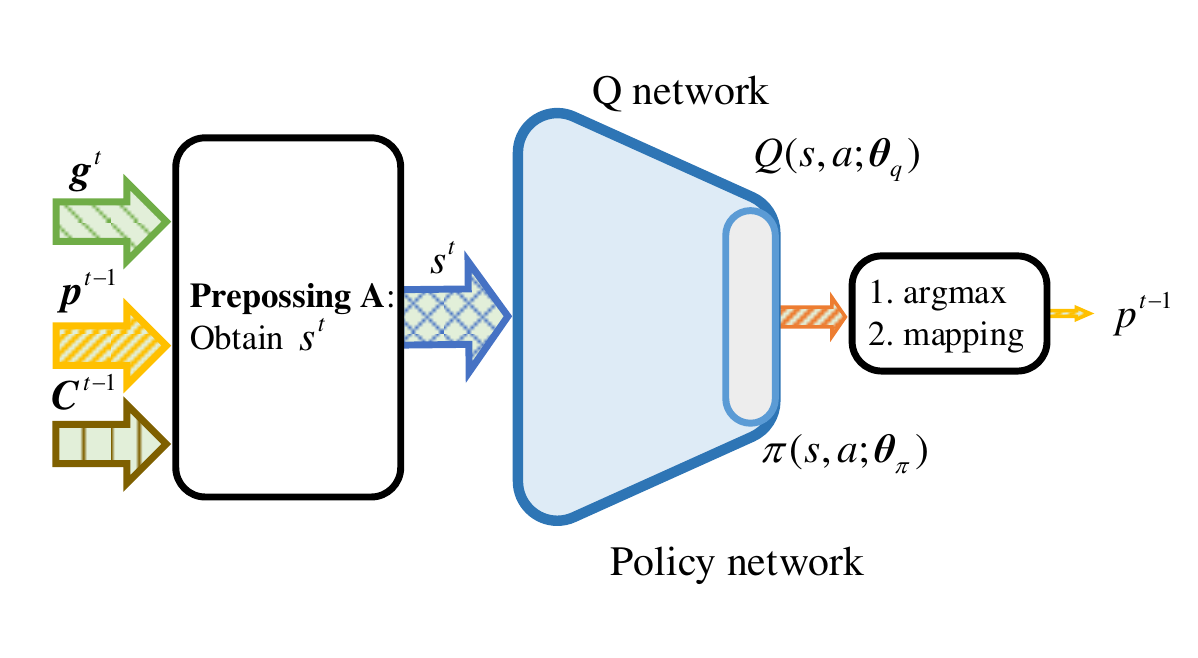}
	\caption{An illustration of data flow graph with \emph{REINFORCE} and DQL (feature $ \textup{f}_2 $).}
	\label{fig:network1}
\end{figure}

\subsection{Policy-based: \emph{REINFORCE}}

The \emph{REINFORCE} is derived as a Monte-Carlo policy-gradient learning algorithm~\cite{Sutton2000},\cite{ DBLP:journals/corr/ThomasB17}. Policy-based algorithms directly generate stochastic policy $ \pi $ instead of indirect Q valuation, and $ \pi $ is parameterized by a policy network $ \pi(a|s;\boldsymbol{\theta}_{\pi}) $ with parameter $ \boldsymbol{\theta}_{\pi} $, as shown in Fig.~\ref{fig:network1}. The overall strategy of stochastic gradient ascent requires a way to obtain samples such that the expectation of sample gradient is proportional to the actual gradient of the performance measure as a function of the parameter. The goal of \emph{REINFORCE} is to maximize expected rewards under policy $ \pi $:
\begin{equation}\label{equ:opt-policy}
\boldsymbol{\theta}_{\pi}^* = \arg \max_{\boldsymbol{\theta}_{\pi}} \mathbb{E}_{\pi} \left[\sum_a \pi(a|s;\boldsymbol{\theta}_{\pi})r^a_s\right]
\end{equation}
where $ \pi(a|s;\boldsymbol{\theta}_{\pi}) $ denotes the policy network, and $ \boldsymbol{\theta}_{\pi} $ is its parameter. The gradient of \eqref{equ:opt-policy} with Monte-Carlo sampling is presented as
\begin{equation}\label{equ:opt-policy-delta}
\nabla {\boldsymbol{\theta}_{\pi}} = \mathbb{E}_{\pi}  \big[\nabla_{\boldsymbol{\theta}_{\pi}} \ln \pi(a|s;\boldsymbol{\theta}_{\pi}) r^a_s \mid_{s=s^t,a=a^t} \big]
\end{equation}
where $ \nabla $ is the gradient operation. The complete deduction is presented in~\cite{Sutton1998Reinforcement}. Since the policy network $ \pi(a|s;\boldsymbol{\theta}_{\pi}) $ directly generates stochastic policy, the optimal action $ a^* $ is selected with the maximum probability:
\begin{equation}\label{equ:opt-policy-a}
a^* = \arg \max_a \pi(a|s;\boldsymbol{\theta}_{\pi})
\end{equation}
and the optimal action value is obtained by a mapping table. Besides, the random action is selected following $ \pi(a|s;\boldsymbol{\theta}_{\pi}) $ in exploration.

In practical training, the algorithm is susceptible to reward scaling. We can alleviate this dependency by whitening the rewards before computing the gradients, and the normalization of reward $ \tilde{r} $ is given as
\begin{equation}\label{equ:opt-policy-normalization}
\tilde{r} = \frac{r - \mu_{r}}{\sigma_{r}}
\end{equation}
where $ \mu_{r} $ and $ \sigma_{r} $ are the mean value and standard deviation of reward $ r $, respectively. The proposed \emph{REINFORCE} algorithm is stated in Algorithm.~\ref{alg:REINFORCE}.

\begin{algorithm}
	\caption{\emph{REINFORCE} algorithm.}
	\begin{algorithmic}[1]
		\STATE \emph{Input:} Episode times $ N_e $, exploration times $ T $, learning rate $ \eta_\pi $.
		\STATE \emph{Initialization:} Initialize policy network $ \pi(a|s;\boldsymbol{\theta}_{\pi}) $ with random parameter $ \boldsymbol{\theta}_{\pi} $.
		\FOR{$ k = 1 $ to $ N_e $}
		\STATE Receive initial state $ s^1 $. 
		\FOR{$ t = 1 $ to $ T $}
		\STATE Select action $ a^t $ following $ \pi(a|s;\boldsymbol{\theta}_{\pi}) $.
		\STATE Execute action $ a^t $, achieve reward $ r^t $ and observe new state $ s^{t+1} $.
		\STATE Calculate $ \tilde{r} $ by \eqref{equ:opt-policy-normalization}.
		\STATE Calculate gradient $ \nabla \boldsymbol{\theta}_{\pi} $ by \eqref{equ:opt-policy-delta}, and update parameter along positive gradient direction $ \boldsymbol{\theta}_{\pi} \gets \boldsymbol{\theta}_{\pi} + \eta_\pi \nabla \boldsymbol{\theta}_{\pi} $.
		\STATE $ s^t \gets s^{t+1} $.
		\ENDFOR
		\ENDFOR
		\STATE \emph{Output:} Learned policy network $ \pi(a|s;\boldsymbol{\theta}_{\pi}) $.
	\end{algorithmic}
	\label{alg:REINFORCE}
\end{algorithm}

\subsection{Value-based: DQL}

DQL is one of the most popular value-based off-policy DRL algorithms. As shown in Fig.~\ref{fig:network1}, the topology of DQL and \emph{REINFORCE} are the same, and the values are estimated by a DQN $ Q(s,a;\boldsymbol{\theta}_q) $, where $ \boldsymbol{\theta}_q $ denotes the parameter. The selection of a good action is based upon accurate estimation, and thus DQL is aimed to search for optimal parameter $ \boldsymbol{\theta}^*_q $ to minimize the $ \ell_2 $ loss:
\begin{equation}\label{equ:opt-value}
\boldsymbol{\theta}_q^* = \arg \min_{\boldsymbol{\theta}_q} \frac{1}{2} \left(Q(s,a;\boldsymbol{\theta}_q) - r^a_s\right)^2.
\end{equation}
The gradient with respect to $ \boldsymbol{\theta}_q $ is given as
\begin{equation}\label{equ:opt-value-delta}
\nabla {\boldsymbol{\theta}_q} = \left(Q(s,a;\boldsymbol{\theta}_q) - r^a_s\right) \nabla_{\boldsymbol{\theta}_q} Q(s,a;\boldsymbol{\theta}_q).
\end{equation}
The optimal action $ a^* $ is selected to maximize the Q value, and it is given by
\begin{equation}\label{equ:opt-value-a}
a^* = \arg \max_a Q(s,a;\boldsymbol{\theta}_q).
\end{equation}
During training, a dynamic $ \varepsilon $-greedy policy is adopted to control the exploration probability, and $ \varepsilon_k $ is defined as
\begin{equation}\label{equ:exp-value}
\varepsilon_k := \varepsilon_1 + \frac{k-1}{N_e-1}(\varepsilon_{N_e} - \varepsilon_1), k = 1,\cdots, N_e
\end{equation}
where $ N_e $ denotes the episode times,  $ \varepsilon_1 $ and $ \varepsilon_{N_e} $ are initial and final exploration probabilities, respectively. Detailed description of our DQL algorithm is presented in Algorithm.~\ref{alg:DQL}.

\begin{algorithm}
	\caption{DQL algorithm.}
	\begin{algorithmic}[1]
		\STATE \emph{Input:} Episode times $ N_e $, exploration times $ T $, learning rate $ \eta_q $, initial and final exploration probability $ \varepsilon_1 $, $ \varepsilon_{N_e} $.
		\STATE \emph{Initialization:} Initialize DQN $ Q(s,a;\boldsymbol{\theta}_q) $ with random parameter $ \boldsymbol{\theta}_q $.
		\FOR{$ k = 1 $ to $ N_e $}
			\STATE Update $ \varepsilon_k $ by \eqref{equ:exp-value}.
			\STATE Receive initial state $ s^1 $.
			\FOR{$ t = 1 $ to $ T $}
				\IF{$ \textup{rand()} < \varepsilon_k $} 
					\STATE Randomly select action $ a^t \in A $ with uniform probability. 
				\ELSE 
					\STATE Select action $ a^t $ by \eqref{equ:opt-value-a}.
				\ENDIF 
				\STATE Execute action $ a^t $, achieve reward $ r^t $ and observe new state $ s^{t+1} $.	
				\STATE Calculate gradient $ \nabla {\boldsymbol{\theta}_q} $ by \eqref{equ:opt-value-delta}, and update parameter along negative gradient direction: $ \boldsymbol{\theta}_q \gets \boldsymbol{\theta}_q - \eta_q \nabla {\boldsymbol{\theta}_q} $.
				\STATE $ s^t \gets s^{t+1} $.
			\ENDFOR
		\ENDFOR
		\STATE \emph{Output:} Learned DQN $ Q(s,a;\boldsymbol{\theta}_q) $.
	\end{algorithmic}
	\label{alg:DQL}
\end{algorithm}

\begin{figure}
	\centering
	\includegraphics[width=3.6in]{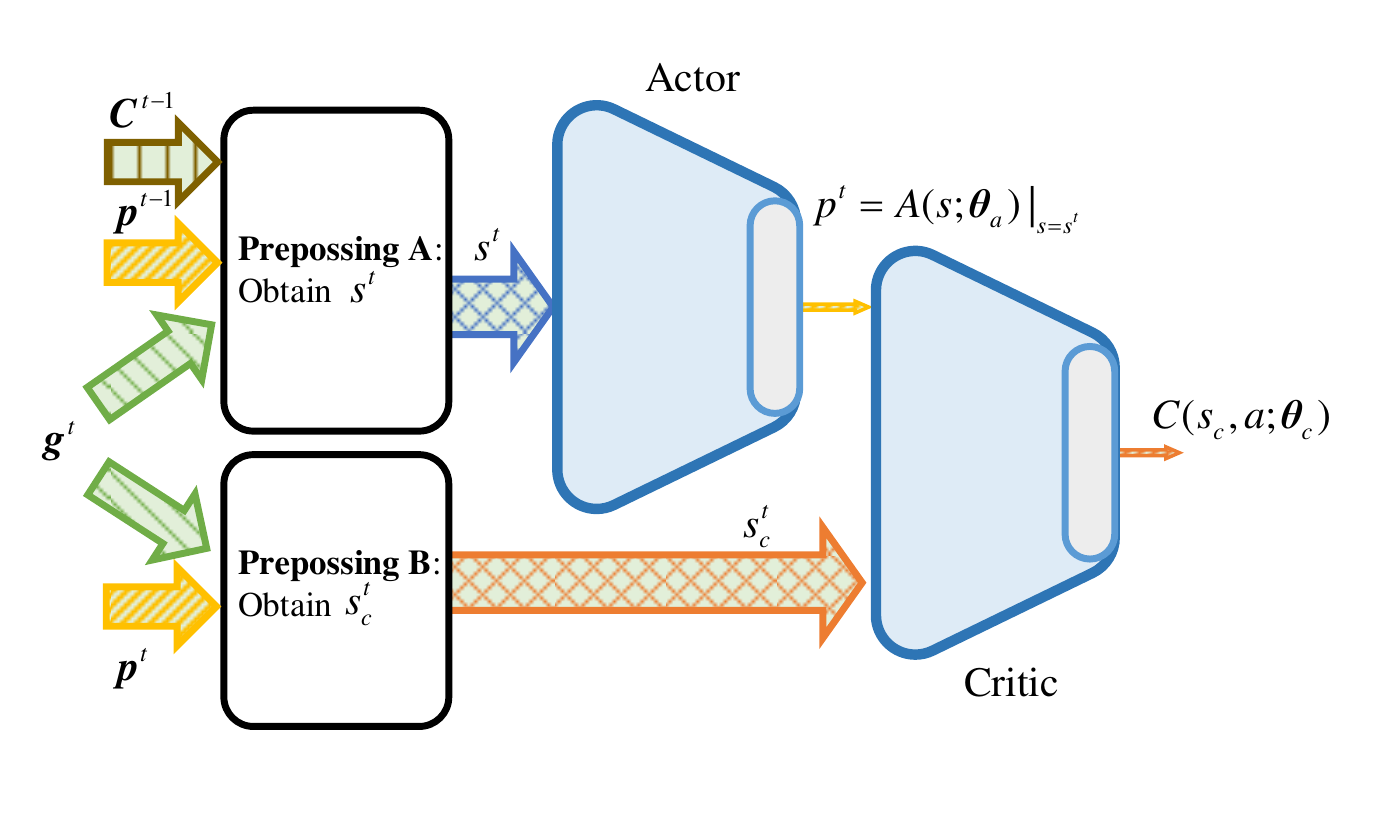}
	\caption{An illustration of data flow graph with DDPG (feature $ \textup{f}_2 $).}
	\label{fig:network2}
\end{figure}

\subsection{Actor-Critic: DDPG}

DDPG is presented as an actor-critic, model-free algorithm based on the deterministic policy gradient that can operate over continuous action spaces. As shown in Fig.~\ref{fig:network2}, an actor generates deterministic action $ a $ with observation $ s $ by a mapping network $ A(s;\boldsymbol{\theta}_a) $, where $ \boldsymbol{\theta}_a $ denotes the actor parameter. The critic predicts the Q value with an action-state pair through a critic network $ C(s_c,a;\boldsymbol{\theta}_c) $, where $ \boldsymbol{\theta}_c $ denotes the critic parameter and $ s_c $ is the critic state. The critic and actor work cooperatively, and the optimal deterministic policy is achieved by solving the following joint optimization problem:
\begin{align}
\boldsymbol{\theta}_a^* & = \arg \max_{\boldsymbol{\theta}_a} C(s_c,a;\boldsymbol{\theta}_c)\mid_{a = A(s;\boldsymbol{\theta}_a)},\label{equ:opt-ddpg-max}\\
\boldsymbol{\theta}_c^* & = \arg \min_{\boldsymbol{\theta}_c} \frac{1}{2} \left(C(s_c,a;\boldsymbol{\theta}_c)\mid_{a = A(s;\boldsymbol{\theta}_a)} - r^a_s\right)^2.\label{equ:opt-ddpg-min}
\end{align}
The actor strives to maximize the evaluation from critic, and the critic aims to make assessment precisely. Both the actor and critic are differentiable, and using chain rule their gradients are given as
\begin{align}
\nabla {\boldsymbol{\theta}_a} & = \nabla_{a} C(s_c,a;\boldsymbol{\theta}_c)\mid_{a = A(s;\boldsymbol{\theta}_a)} \nabla_{\boldsymbol{\theta}_a} A(s;\boldsymbol{\theta}_a),\label{equ:opt-ddpg-delta1}\\
\nabla {\boldsymbol{\theta}_c} & = \left(C(s_c,a;\boldsymbol{\theta}_c) - r^a_s\right) \nabla_{\boldsymbol{\theta}_c} C(s_c,a;\boldsymbol{\theta}_c)\mid_{a = A(s;\boldsymbol{\theta}_a)}.\label{equ:opt-ddpg-delta2}
\end{align}
The deterministic action is directly obtained by the actor:
\begin{equation}\label{equ:opt-ddpg-a}
a^* = A(s;\boldsymbol{\theta}_a).
\end{equation}
Similar to the dynamic $ \varepsilon $-greedy policy, the exploration action in episode $ k $ is defined as 
\begin{equation}\label{equ:opt-ddpg-noise}
a := \big\lbrack A(s;\boldsymbol{\theta}_a) + n^k\big\rbrack_0^{P_{\max}}
\end{equation}
where $ n^k $ is an additional noise and follows uniform distribution:
\begin{equation}\label{equ:noise}
n^k \sim \mathcal{U}(-\frac{P_{\max}}{k}, \frac{P_{\max}}{k})
\end{equation}
and action $ a $ is bounded by the interval $ [0, P_{\max}] $.

The critic $ C(s_c,a;\boldsymbol{\theta}_c) $ can be regarded as an auxiliary network to transfer gradient in learning, and it is needless in further testing. The $ C(s_c,a;\boldsymbol{\theta}_c) $ must be differentiable, but not necessarily trainable. The critic is model-based in this approach, since the evaluating rules are available with \eqref{equ:sinr}, \eqref{equ:C} and \eqref{equ:reward} in off-line training. 

However, the model-based actor is confirmed and infeasible to accommodate the unknown issues in on-line training. Meanwhile, complex reward function is difficult to be approximated accurately with pure NN parameters. Therefore, a semi-model-free critic is suggested, with utilization of both priori knowledge and flexibility of NN. Similar to the preprocessing of $ s $, the state for critic $ s_c = \widetilde{\boldsymbol{C}}_{n,k}^t $ is obtained by \eqref{equ:sinr}, \eqref{equ:C}, \eqref{equ:sort1} and \eqref{equ:c_t} with $ x = \{C_{n,k}^t|\forall n,k\} $ and $ y = I_c $. The detailed DDPG algorithm is introduced in Algorithm.~\ref{alg:DDPG}.

The policy gradient algorithm is developed with stochastic policy $ \pi(a|s) $, but sampling in continuous or high-dimensional action space is inefficient. The deterministic policy gradient is proposed to overcome this problem. On the other hand, in contrast with value-based DQL, the critic $ C(s,a;\boldsymbol{\theta}_c) $ and Q value estimator $ Q(s,a;\boldsymbol{\theta}_q) $ are similar in terms of function. The difference is that a critic takes both $ a $ and $ s $ as input and then predict Q value, but $ Q(s,a;\boldsymbol{\theta}_q) $ estimates all actions' corresponding Q values with input $ s $.

\begin{algorithm}
	\caption{DDPG algorithm.}
	\begin{algorithmic}[1]
		\STATE \emph{Input:} Episode times $ N_e $, exploration times $ T $, actor learning rate $ \eta_a $, critic learning rate $ \eta_c $.
		\STATE \emph{Initialization:} Initialize actor $ A(s;\boldsymbol{\theta}_a) $ and critic $ C(s,a;\boldsymbol{\theta}_c) $ with random parameter $ \boldsymbol{\theta}_a $ and $ \boldsymbol{\theta}_c $.
		\FOR{$ k = 1 $ to $ N_e $}
		\STATE Receive initial state $ s^1 $, obtain $ s^1_c $ with \eqref{equ:sinr} and \eqref{equ:C}.
		\FOR{$ t = 1 $ to $ T $}
		\STATE Get action $ a^t $ by \eqref{equ:opt-ddpg-noise}. 
		\STATE Execute action $ a^t $, achieve reward $ r^t $, observe new state $ s^{t+1} $ and obtain $ s^{t+1}_c $ with \eqref{equ:sinr} and \eqref{equ:C}.
		\STATE Calculate critic gradient $ \nabla \boldsymbol{\theta}_c $ by \eqref{equ:opt-ddpg-delta1}, update parameter along negative gradient direction $ \boldsymbol{\theta}_c \gets \boldsymbol{\theta}_c - \eta_c \nabla {\boldsymbol{\theta}_c} $.
		\STATE Calculate actor gradient $ \nabla \boldsymbol{\theta}_a $ by \eqref{equ:opt-ddpg-delta2}, update parameter along positive gradient direction $ \boldsymbol{\theta}_a \gets \boldsymbol{\theta}_a + \eta_a \nabla {\boldsymbol{\theta}_a} $.
		\STATE $ s^t \gets s^{t+1} $.
		\STATE $ s^t_c \gets s^{t+1}_c $.
		\ENDFOR
		\ENDFOR
		\STATE \emph{Output:} Learned actor $ A(s;\boldsymbol{\theta}_a) $ and critic $ C(s,a;\boldsymbol{\theta}_c) $.
	\end{algorithmic}
	\label{alg:DDPG}
\end{algorithm}

\section{Simulation Results}\label{sec:sim}

\subsection{Simulation Configuration}

In the training procedure, a cellular network with $ N = 25 $ cells is considered. In each cell, $ K = 4 $ APs are located uniformly and randomly within the range $ [R_{\min}, R_{\max}] $, where $ R_{\min} = 0.01 $ km and $ R_{\max} = 1 $ km are the inner space and half cell-to-cell distance, respectively. The Doppler frequency $ f_d = 10 $ Hz and time period $ T_s = 20 $ ms are adopted to simulate the fading effects. According to the LTE standard, the large-scale fading is modeled as
\begin{equation}\label{equ:beta}
\beta = - 120.9 - 37.6 \log_{10}(d) + 10 \log_{10}(z)
\end{equation}
where log-normal random variable $ z $ follows $ \ln z \sim \mathcal{N}(0, \sigma^2_z) $ with $ \sigma_z^2 = 8 $ dB, and $ d $ is the transmitter-to-receiver distance. The additional white Gaussian noise (AWGN) power $ \sigma^2 $ is $ -114 $ dBm, and the emitting power constraints $ P_{\textup{min}} $ and $ P_{\textup{max}} $ are $ 5 $ dBm and $ 38 $ dBm, respectively. Besides, the maximal SINR is restricted by $ 30 $ dB. 
 
The cardinality of adjacent cells is $ |D_n| = 18, \forall n $, the first $ I_c = 16 $ interferers remain and power level number $ |A| = 10 $. Therefore, the input state dimensions $ |S| $ with feature $ \mathrm{f}_1 $, $ \mathrm{f}_2 $ are $ 32 $ and $ 48 $, respectively. The weight coefficient $ \alpha = 1 $. In episode $ k $, the large-scale fading is invariant and thus the number of episode $ N_e = 5000 $, being large to overcome the generalization problem. The time slots per episode $ T = 10 $, being small to reduce over-fitting in $ k $-th specific scenario. The Adam~\cite{Kingma2014Adam} is adopted as the optimizer for all DRLs. In Table.~\ref{tab:tab1}, the architectures of all DNNs and the hyper-parameter settings are listed in detail. The left and right parts of the layer are activation function and neuron number, respectively. These default settings will be clarified once changed in the following simulations. The training procedure is independently repeated $ 50 $ times for each algorithm design, and the testing result is obtained from $ 500 $ times generated scenarios. The simulation codes are available at~\url{https://github.com/mengxiaomao/DRL_PA}.

\begin{table} 
	\centering 
	\caption{Hyper-parameters setup and DNN architecture.} 
	\begin{tabular}{ccccc} 
		\toprule  
		\multirow{3}{*}{Setting} & \multicolumn{4}{c}{Algorithm} \\ 
		\cline{2-5} 
		& \multirow{2}{*}{RF} & \multirow{2}{*}{DQL} & \multicolumn{2}{c}{DDPG} \\
		& & & Actor & Critic \\ 
		\midrule
		Learning rate & $ 1e^{-4} $ & $ 1e^{-3} $ & $ 1e^{-4} $ & $ 1e^{-3} $ \\
		\multirow{2}{*}{Exploration} & \multirow{2}{*}{\eqref{equ:exp-value}} & $ \varepsilon_1=0.2 $ & \multirow{2}*{\eqref{equ:opt-ddpg-noise}} & \multirow{2}{*}{-} \\ 
		& & $ \varepsilon_{N_e}=1e^{-4} $ & & \\
		\cline{2-5}	
		Output layer & softmax,$ |A| $ & linear,$ |A| $ & \eqref{equ:ddpg_output},$ 1 $ &  linear,$ 1 $ \\ 
		\multirow{2}{*}{Hidden layer} & ReLU,$ 64 $ & ReLU,$ 64 $ & ReLU,$ 64 $ & \multirow{2}{*}{ReLU,$ 64 $} \\ 
		& ReLU,$ 128 $ & ReLU,$ 128 $ & ReLU,$ 128 $ & \\
		Input layer & linear,$ |S| $ & linear,$ |S| $ & linear,$ |S| $ &  linear,$ I_c $ \\
		\bottomrule 
	\end{tabular} 
\label{tab:tab1} 
\end{table}

\subsection{DRL Algorithm Comparison}\label{A}

In this subsection, the sum-rate performance of \emph{REINFORCE}, DQL and DDPG is studied, in terms of experience replay, feature selection and quantization error. The notations $ \sigma^2_c $, $ \bar{C} $ and $ \bar{C}^* $ are defined as variance of sum-rate, average sum-rate, the average sum-rate of top $ 20\% $ over independent repetitive experiments, respectively. The $ \bar{C}^* $ is an indicator to measure performance of the well-trained algorithms. 

\subsubsection{Experience Replay}\label{ep}

Since the parameter initialization and data generation are stochastic, the performance of DRL algorithms can be influenced to varying degrees. As shown in Table.~\ref{tab:tab2}\footnote[1]{The proposed DDPG is not applicable for experience replay and thus the corresponding simulation result is omitted.}, the \emph{REINFORCE} and experience replay are abbreviated as RF and ER, respectively. Generally, the experience replay helps the DRLs reduce the variance of sum-rate $ \sigma^2_c $ and improve the average sum-rate $ \bar{C} $, but its influence on best results $ \bar{C}^* $ is negligible.

The variance $ \sigma^2_c $ of \emph{REINFORCE} is the highest, and we find it difficult to stabilize the training results even with experience replay and normalization in \eqref{equ:opt-policy-normalization}. In contrast, the DQL is much more stable. While the $ \sigma^2_c $ of DDPG is the lowest, up to one or more orders of magnitude lower than the \emph{REINFORCE}. This indicates that DDPG has strong robustness to random issues. Moreover, DDPG achieves the highest $ \bar{C}^* $. In general, the $ \bar{C}^* $ performance of \emph{REINFORCE} and DQL are almost the same, and \emph{REINFORCE} performs slightly better than DQL but has weaker stabilization. The DDPG overwhelms these two algorithms, in terms of both sum-rate performance and robustness.

\begin{figure}
	\centering
	\includegraphics[width=3.6in]{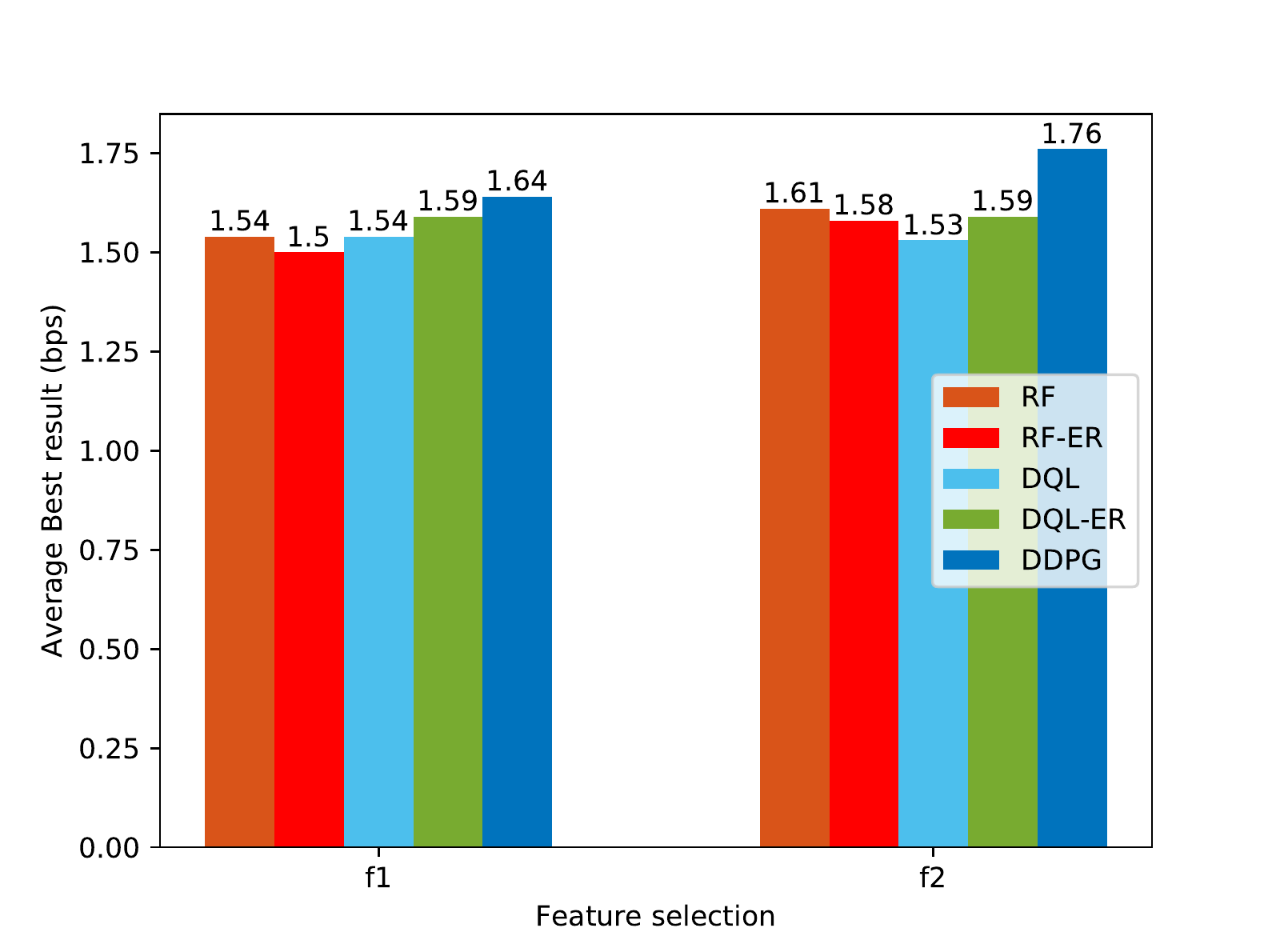}
	\caption{Average best sum-rate $ \bar{C}^* $ of different DRL algorithms, with or without the experience replay technique.}
	\label{fig:exp}
\end{figure}

\subsubsection{Feature Engineering}\label{fe}

Next we compare the performance with features $ \mathrm{f}_1 $ or $ \mathrm{f}_2 $. As shown in Table.~\ref{tab:tab2} and Fig.~\ref{fig:exp}, the assisted information $ \widetilde{\boldsymbol{C}}^{t-1}_{n,k} $ in $ \mathrm{f}_2 $ generally improves the average sum-rate $ \bar{C} $. Besides, the improvement on best results $ \bar{C}^* $ is notable, especially for DDPG algorithm. We speculate that the mapping function is hard to approximate for a simple NN, due to the multiplication, division and exponentiation operations in \eqref{equ:sinr} and \eqref{equ:C}. Meanwhile, the variance $ \sigma^2_c $ is increased with the additional feature. To achieve the highest sum-rate score by repetitive training, this feature is important. The improved performance is achieved at a cost of enlarged input dimension and more training times. On the other hand, a simplified feature state is meaningful for on-line training since the data and computational resource can be restricted and costly. 

\subsubsection{Quantization Error}\label{qe}

In a common sense, the quantization error can be gradually reduced by increasing the digitalizing bit. Therefore, the number of power level $ |A| \in \{3, 6, 10, 14, 20, 40\} $ in this designed experiment, and the best result $ \bar{C}^* $ is used as the measurement. As illustrated in Fig.~\ref{fig:quan}, the $ \bar{C}^* $ of \emph{REINFORCE} and DQL both slightly rise as the $ |A| $ increases from $ 3 $ to $ 10 $. However, further increase of output dimension cannot improve the sum-rate performance. The $ \bar{C}^* $ of DQL drops slowly, while that of \emph{REINFORCE} experiences a dramatic decline from $ 1.54 $ bps to $ 1.19 $ bps, as the $ |A| $ increases from $ 14 $ to $ 40 $. This indicates that the huge action space can lead to difficulties in practical training especially for \emph{REINFORCE}, and also full elimination of quantization error is infeasible by simply enlarging action space. In addition, DDPG has no need for discretization of space by nature, and it outperforms both DQL and \emph{REINFORCE} algorithms.

\begin{figure}
	\centering
	\includegraphics[width=3.6in]{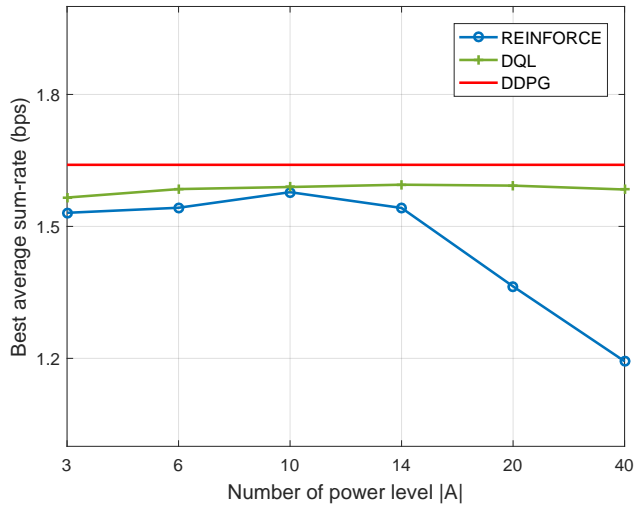}
	\caption{Plot of best result $ \bar{C}^* $ versus power level number $ |A| $ (using feature $ \textup{f}_1 $).}
	\label{fig:quan}
\end{figure}

\begin{table} 
	\centering 
	\caption{Comparison of different DRL design.} 
	\begin{tabular}{cccccc} 
		\toprule  
		\multirow{2}{*}{Variable} & \multicolumn{5}{c}{$ \mathrm{f}_1 $} \\ 
		\cline{2-6} 
		& RF & RF-ER & DQL & DQL-ER & DDPG \\ 
		\midrule
		$ \sigma^2_c $ & $ 3.95e^{-2} $ & $ 3.89e^{-2} $ & $ 2.73^{-2} $ & $ 2.76^{-3} $ & $ 3.37e^{-4} $\\
		$ \bar{C} $ & $ 1.33 $ & $ 1.24 $ & $ 1.44 $ & $ 1.53 $ & $ 1.62 $\\
		$ \bar{C}^* $ & $ 1.54 $ & $ 1.50 $ & $ 1.54 $ & $ 1.59 $ & $ 1.64 $\\
		\cline{1-6}
		& \multicolumn{5}{c}{$ \mathrm{f}_2 $} \\ 
		\midrule
		$ \sigma^2_c $ & $ 8.02e^{-2} $ & $ 5.85e^{-2} $ & $ 1.49e^{-2} $ & $ 8.99e^{-3} $ & $ 2.48e^{-3} $\\
		$ \bar{C} $ & $ 1.32 $ & $ 1.38 $ & $ 1.39 $ & $ 1.50 $ & $ 1.71 $\\
		$ \bar{C}^* $ & $ 1.61 $ & $ 1.58 $ & $ 1.53 $ & $ 1.59 $ & $ 1.76 $\\
		\bottomrule 
	\end{tabular} 
	\label{tab:tab2} 
\end{table}

\subsection{Generalization Performance}\label{vs}

For the following simulations, the learned models with the best result $ \bar{C}^* $ and feature $ \mathrm{f}_2 $ are selected for further study. In the previous subsection, we mainly focus on comparisons between different DRL algorithms, and the training set and testing set are I.I.D. However, the statistical characteristics in real scenarios vary over time, and tracking the environment with frequent on-line training is impractical. Therefore, a good generalization ability is significant to be robust against changing issues. The FP, WMMSE, maximum power and random power schemes are considered as benchmarks to evaluate our proposed DRL algorithms.  

\subsubsection{Cell range}\label{range}

In this part, the half cell-to-cell range $ R_{\max} $ is regarded as a variable. Nowadays, the cells are getting smaller, and thus the range set $ \{0.1,0.2,0.3,0.4,0.6,0.8,1.0,1.2,1.5\} $ km is considered. As shown in Fig.~\ref{fig:dis}, generally the intra/inter cell interference is stronger as the cell range becomes smaller, and thus the average sum-rate decreases. The sum-rate performance of random/maximum power is the lowest, while the FP and WMMSE achieve much higher spectral efficiency. The performances of these two algorithms are comparable, and WMMSE performs slightly better than FP. In contrast, all the data-driven algorithms outperform the model-driven methods, and the proposed actor-critic-based DDPG achieves the highest sum-rate value. Additionally, the learned models are obtained in the simulation environment with fixed range $ R_{\max} $, but performance degradation in these unknown scenarios is not found. Therefore, our learned data-driven models with proposed algorithms show good generalization ability in terms of varying cell ranges $ R_{\max} $.

\begin{figure}
	\centering
	\includegraphics[width=3.6in]{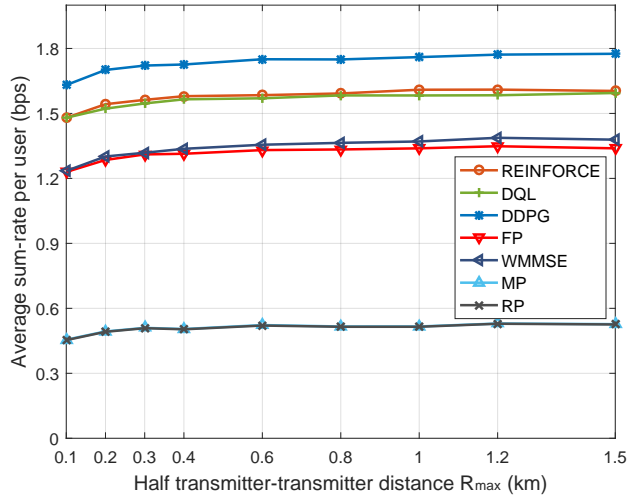}
	\caption{Average sum-rate versus half cell range $ R_{\max} $.}
	\label{fig:dis}
\end{figure}

\subsubsection{User Density}\label{density}

In a practical scenario, the user density can change over time and location, so it is considered in this simulation. The user density is changed by the number of AP per cell $ K $, which ranges from $ 1 $ to $ 8 $. As plotted in Fig.~\ref{fig:maxM}, the average sum-rate drops as the users become denser, and all the algorithms have the similar trend. Apparently, the DRL approaches outperform the other schemes, and DDPG again achieves the best sum-rate performance. Hence, the simulation result shows that the learned data-driven models also show good generalization ability on different user densities.

\begin{figure}
	\centering
	\includegraphics[width=3.6in]{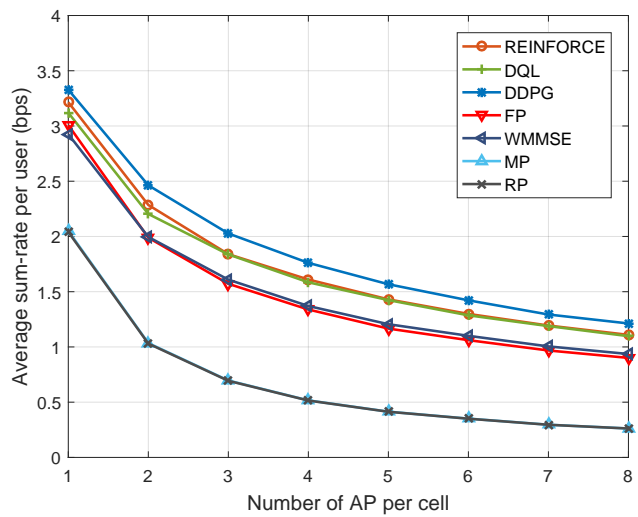}
	\caption{Average sum-rate per AP versus different user densities.}
	\label{fig:maxM}
\end{figure}

\subsubsection{Doppler frequency}\label{fd}

The Doppler frequency $ f_d $ is a significant variable related to the small-scale fading. Since the information at last instant is utilized for the current power allocation, fast fading can lead to performance degradation for our proposed data-driven models. Meanwhile, the model-driven algorithms are not influenced by $ f_d $ by nature. The Doppler frequency $ f_d $ is sampled in range from $ 4 $ Hz to $ 18 $ Hz, and the simulation results in Fig.~\ref{fig:fd} show that the average sum-rates of data-driven algorithms drop slowly in this $ f_d $ range. This indicates that the data-driven models also are robust against Doppler frequency $ f_d $.

\begin{figure}
	\centering
	\includegraphics[width=3.6in]{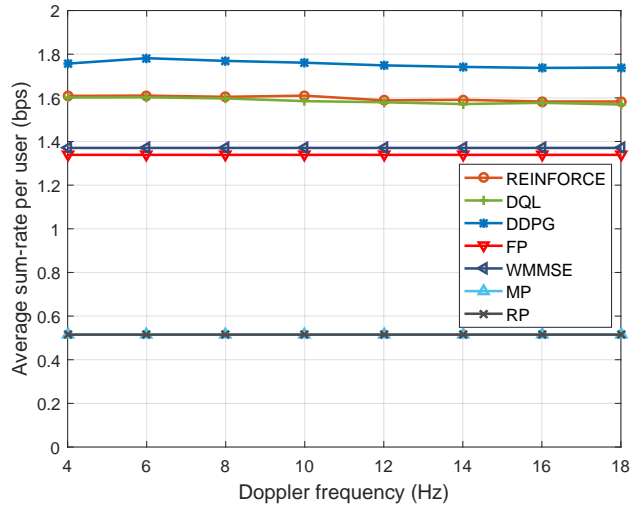}
	\caption{Average sum-rate versus different Doppler frequency $ f_d $.}
	\label{fig:fd}
\end{figure}

\subsection{Computation Complexity}\label{cc}

Low computation complexity is crucial for algorithm deployment and it is considered here. The simulation platform is presented as: CPU Intel i7-6700 and GPU Nvidia GTX-1070Ti. There are $ 100 $ APs in the simulated cellular network, the time cost per execution $ T_c $ of our proposed distributed algorithms and the centralized model-based methods are listed in Table.~\ref{tab:tab3}. It is interesting that the calculation time with GPU is higher than that of CPU, and we consider that the GPU cannot be fully utilized with small scale DNN and distributed execution\footnote[2]{The common batch operation cannot be used under distributed execution in real scenario.}. It can be seen that the time cost of three DRLs are almost the same due to similar DNN models, and in terms of only CPU time, they are about $ 15.5 $ and $ 61.0 $ times faster than FP and WMMSE, respectively. Fast execution speed with DNN tools can be explained by several points:
\begin{enumerate}
	\item The execution of our proposed algorithms is distributed, and thus the time expense is a constant as the total amount of users $ NK $ increases, at a cost of more calculation devices (equal to $ NK $).
	\item Most of the operations in DNNs involve matrix multiplication and addition, which can be accelerated by parallel computation. Besides, the simple but efficient activation function ReLU: $ \max(\cdot,0) $ is adopted.
\end{enumerate}
In summary, the low computational time cost of the proposed DRLs can be attributed to distributed execution framework, parallel computing architecture and simple efficient function.  

\begin{table} 
	\centering 
	\caption{Average time cost per execution $ T_c $ ($ \textup{sec} $).} 
	\begin{tabular}{cccccc} 
		\toprule  
		\multirow{2}{*}{} & \multicolumn{5}{c}{Algorithm} \\ 
		\cline{2-6} 
		& RF & DQL & DDPG & FP & WMMSE \\ 
		\midrule
		CPU & $ 3.21e^{-4} $ & $ 3.10e^{-4} $ & $ 3.11e^{-4} $ & $ 4.80^{-3} $ & $ 1.84e^{-2} $\\
		GPU & $ 4.48e^{-4} $ & $ 4.47e^{-4} $ & $ 4.63e^{-4} $ & - & -\\
		\bottomrule 
	\end{tabular} 
	\label{tab:tab3} 
\end{table}

\section{Conclusions \& Discussions}\label{sec:con}

The distributed power allocation with proposed DRL algorithms in wireless cellular networks with IMAC was investigated. We presented a mathematically analysis on the proper design and application of DRL algorithms at a systematic level by considering inter-cell cooperation, off-line/on-line training and distributed execution. The concrete algorithm design was further introduced. In theory, the sum-rate performances of DQL and \emph{REINFORCE} algorithms are the same with proper training, and DDPG outperforms these two methods by eliminating quantization error. The simulation results agree with our expectation, and DDPG performs the best in terms of both sum-rate performance and robustness. Besides, all the data-driven approaches outperform the state-of-art model-based methods, and also show good generalization ability and low computational time cost in a series of experiments. 

The data-driven algorithm, especially DRL, is a promising technique for future intelligent networks, and the proposed DDPG algorithm can be applied to general tasks with discrete/continuous state/action space and joint optimization problems of multiple variables. Specifically speaking, the algorithm can be applied to many problems such as user scheduling, channel management and power allocation in various communication networks. 

\section{Acknowledgments}

This work was supported in part by the National Natural Science Foundation of China (Grant No. 61801112, 61601281), the Natural Science Foundation of Jiangsu Province (Grant No. BK20180357), the Open Program of State Key Laboratory of Millimeter Waves (Southeast University, Grant No. Z201804).

% Generated by IEEEtran.bst, version: 1.13 (2008/09/30)

\end{document}